\def\fin{{<\infty}}
\def\eps{\epsilon}
\def\hX{\hat{X}}
\def\D{{\mathcal D}}
\def\H{{\mathcal H}}
\def\X{{\mathcal X}}
\def\Y{{\mathcal Y}}
\def\M{{\mathcal M}}
\def\S{{\mathcal S}}
\def\Z{{\mathcal Z}}
\def\indep{{\perp\!\!\!\perp}}
\def\sBer{{\mathsf{Bernoulli}}}
\newtheorem{example}{Example}
\newtheorem{definition}{Definition}
\newtheorem{theorem}{Theorem}
\newtheorem{lemma}{Lemma}
\theoremstyle{remark}
\newtheorem{remark}{Remark}
\tikzstyle{RectObject}=[rectangle,fill=white,draw,line width=0.2mm]
\tikzstyle{line}=[draw]
\tikzstyle{arrow}=[draw, -latex]
\begin{document}

\title{\vspace{5.5mm}Notes on Information-Theoretic Privacy}
\author{\IEEEauthorblockN{Shahab Asoodeh, Fady Alajaji, and Tam\'{a}s Linder}\\
    \IEEEauthorblockA{Department of Mathematics and Statistics, Queen's University
    \\\{asoodehshahab, fady, linder\}@mast.queensu.ca}
}
\restoregeometry
\maketitle

\begin{abstract}
We investigate the tradeoff between privacy  and utility in a situation where both privacy and utility are measured in terms of  mutual information. For the binary case, we fully characterize this tradeoff in case of \emph{perfect privacy} and also give an upper-bound for the case where some privacy leakage is allowed. We then introduce a new quantity which quantifies the amount of private information contained in the observable data and then connect it to the  optimal tradeoff between privacy and utility.

\end{abstract}
\section{Introduction}
Suppose Alice has some personal information which is represented by random variable $X$ and she wants to keep this personal information as private as possible. However there exists some correlated information, represented by $Y$, observable by an advertising company and to be displayed publicly by this company. The company gets paid to send the most information about $Y$, and at the same time it does not want to violate the privacy of Alice. The question raised in this situation is then how much information about $Y$ can be displayed without breaching privacy? Hence, it is of interest to characterize such competing objectives in the form of a quantitative tradeoff. Such a characterization provides a controllable balance between utility and privacy.

Statistical studies regarding privacy were started by Warner \cite{warner} who suggested privacy-preserving methods for survey sampling. More recently, a measure known as differential privacy was introduced by Dwork et al.\ \cite{Dwork2006}. In this setting, usually the source is modelled as a \emph{database} $X=(X_1,X_2,\dots, X_n)\in\D^n$ where $\D=\{0,1\}^\ell$.  A \emph{mechanism} $\M:\D^n\to \S$, where $\S$ is a set not necessarily equal to $\D^n$, then produces the \emph{sanitized} database based on the tradeoff between accuracy and privacy. The accuracy of differential privacy is defined via a query $q:\S\to \mathcal{R}$ where $\mathcal{R}$ is some abstract set. The query can be viewed as a question about the original database $X$ that one might ask. Each query is then answered using the sanitized database $Z$. On the one hand, the data provider wants to have an accurate answer to each query, and on the other hand, the provider needs to satisfy a certain level of privacy.

In an  information-theoretic context, $\M$ is simply a Markov kernel (i.e., channel) $Q_n(\cdot|X)$ with output $Z:=\M(X)$ which takes values in $\S$. The privacy is then measured by the upper-bound of the likelihood ratio of $x$ and $x'$ with Hamming distance $1$, that is, the mechanism is called $\eps$-differentially private if $\frac{Q_n(B|x)}{Q_n(B|x')}\leq \exp(\eps)$ for all measurable $B\subset \S$ and all $x,x'\in\D^n$ such that $d_H(x,x')=1$, where $d_H$ is the Hamming distance. Note that this definition does not involve the prior distribution of $x$ and $x'$. Another measure of privacy was recently proposed under the name of \emph{a posteriori differential privacy} which incorporates the prior distributions by Wang et al.\ \cite{posteriordifferential}.

The locality requirement of $d_H(x,x')=1$ in the definition makes it hard to connect differential privacy to information theory. To overcome this problem, Duchi et al.\ \cite{privacyaware} removed the condition $d_H(x,x')=1$. This generalized definition yields the upper bound $I(X,Z)\leq \epsilon$ on the mutual information, which gives an information-theoretic interpretation of differential privacy. Hence generalized $\eps$-differentially private mechanism leaks at most $\eps$ private information.

Despite its frequent use in computer science, differential privacy does not characterize the optimal balance between privacy versus accuracy.  For example, if we want only 1\% privacy leakage, it is not clear what the best achievable accuracy is. Furthermore, it is not clear how to define differential privacy when instead of the database $X$, another database $Y$, correlated with $X$, is observable.

The problem treated in this paper can also be contrasted with the more well-studied concept of \emph{secrecy}. While in secrecy problems, e.g., in cryptography, wiretap channel problems, etc., the aim is  to keep information secret only from wiretappers, the problem treated in privacy further aims to keep the correlated source private from the intended receiver.

Although there has been no universal way of measuring privacy in the literature, in this work we follow Yamamoto \cite{yamamotoequivocationdistortion} who proposed a private source coding model. He introduced the equivocation as the conditional entropy of the private message given the observation and then defined the privacy in the system as the equivocation involved in the decoding. He then defined the rate-distortion-equivocation function as the tradeoff between utility (i.e., distortion) and privacy (i.e, equivocation).  Inspired by this work, we use the mutual information between private information $X$ and the displayed information $Z$ as the measure of privacy and also use the mutual information between the observable data $Y$ and $Z$ as utility and then define the rate-privacy function as the optimal tradeoff between these quantities. Defining utility and privacy using the mutual information gives a more intuitive measure of how much the receiver knows about $Y$ and how much of the private information is leaked to the receiver.

The paper is organized as follows. In section II, we formulate the problem in terms of the rate-privacy function and also study the binary case. We show that if zero privacy leakage is required, then in the binary case, no information from $Y$ can be transmitted.  In section III we give a multi-letter version of the rate-privacy function in a special case and show that even if $n$ different copies of $Y$ are observed, non-zero information can be transmitted about $Y$ when vanishing privacy leakage is required. In section IV, we define a new quantity related with privacy and pose an intuitive question connecting the new quantity with the rate-privacy function for the case of zero privacy leakage.
\section{Problem Formulation and the rate-privacy Function}
Consider two random variables $X\in\X$ and $Y\in \Y$ with $|\X|, |\Y|\fin$ and fixed joint distribution $P_{XY}$. $X$ is the \emph{private data} and $Y$ is the \emph{observable data} correlated with $X$. Suppose there exists a channel $P_{Z|Y}$ such that $Z$, the \emph{displayed data}, has limited information about $X$. This channel is called the \emph{privacy filter}. The objective is then to find the most informative privacy filter, i.e., a channel which preserves most of the information contained in $Y$. This setup is shown in Fig.~\ref{fig:privacy}.
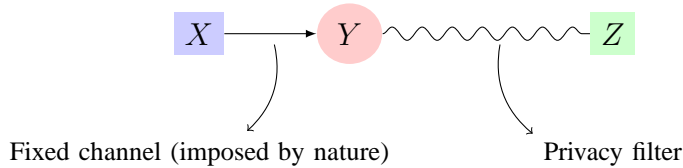
\begin{figure}[!h]
\centering
\begin{tikzpicture}
        \draw (-1,-1) node[fill=blue!20, anchor=base] (private) {$X$};
        \draw (1,-1) node[fill=red!20, ellipse,anchor=base] (public) {$Y$};
        \draw (4.5,-1) node[fill=green!20, anchor=base] (generate) {$Z$};
        \path [arrow] (private) -- (public);
        \draw[decorate, decoration = {snake, segment length = .4cm }] (public) --  (generate);
         \node[below = of private] (note1) {\small{Fixed channel (imposed by nature)}};
          \coordinate (channel1) at (0,-1);
         \draw [->] (channel1)        to [bend left] (note1);
          \node[below  = of generate ] (note2) {\small{Privacy filter}};
          \coordinate (channel2) at (3,-1);
         \draw [->] (channel2)        to [bend right] (note2);
\end{tikzpicture}
\caption{Information-theoretic privacy.} \label{fig:privacy}
\end{figure}
In particular, we are interested in characterizing the quantity,
\begin{equation}\label{gepsilon}
  g_{\epsilon}(X;Y):=\max_{P_{Z|Y}:I(X;Z)\leq \eps}I(Y;Z),
\end{equation}
which we call, the \emph{rate-privacy function}.
The dual representation of $g_{\epsilon}(X;Y)$ is given in \cite{funnel} and called the \emph{privacy funnel}. Basically, in this model, the privacy and utility are both measured using mutual information. Note that since $I(Y;Z)$ is a convex function of $P_{Z|Y}$ and furthermore the constraint set $\D_{\eps}:=\{P_{Z|Y}:~I(X;Z)\leq \eps\}$ is convex and compact, the maximum in \eqref{gepsilon} occurs at the extreme points, namely for a $P_{Z|Y}\in\D_{\eps}$ such that $I(X;Z)=\eps$.  If we restrict $P_{Z|Y}$ to be a deterministic function $f$, we get the simplified quantity
\begin{equation}\label{gtildaepsilon}
  \tilde{g}_{\epsilon}(X;Y):=\sup_{f: I(f(Y); X)= \epsilon}H(f(Y)).
\end{equation}
Using the Carath\'{e}odory-Fenchel theorem, one can readily show that it suffices that the random variable $Z$ is supported on an alphabet $\Z$ with cardinality $|\Z|\leq |\Y|+1$.

In the study of $g_{\eps}(X;Y)$ for general $P_{XY}$, the most interesting case is when $\eps=0$ (the so-called \emph{prefect privacy}), i.e., no privacy leakage is allowed.
The following theorem shows that for binary $X$ and $Y$ and an arbitrary channel between $X$ and $Y$ the requirement of  perfect privacy allows no information transfer from $Y$.
\begin{theorem}\label{theorem1}
For any pair of dependent binary random variables $X$ and $Y$, we have $$g_0(X;Y)=0.$$
\end{theorem}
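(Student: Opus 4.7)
The plan is to exploit the binary cardinality of $Y$ together with the Markov chain $X - Y - Z$ (inherent in the definition of a privacy filter) to show that the perfect-privacy constraint $I(X;Z) = 0$ forces $Z$ to be independent of $Y$, leaving no room for utility.

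First I would translate $I(X;Z)=0$ into the pointwise identity $P_{Z|X}(z|0) = P_{Z|X}(z|1)$ for every $z$, and then use the Markov chain to expand each side as $\sum_y P_{Z|Y}(z|y)P_{Y|X}(y|x)$. Taking the difference of the two conditionals yields, for each $z \in \Z$, the linear constraint
\[
\sum_{y \in \{0,1\}} P_{Z|Y}(z|y)\bigl[P_{Y|X}(y|0) - P_{Y|X}(y|1)\bigr] = 0.
\]

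Next I would exploit binarity: the two coefficients $P_{Y|X}(y|0) - P_{Y|X}(y|1)$ for $y \in \{0,1\}$ sum to zero (being the difference of two probability mass functions on a two-point set), so they are negatives of each other and both nonzero, since the dependence hypothesis rules out $P_{Y|X}(\cdot|0) = P_{Y|X}(\cdot|1)$. The displayed identity therefore collapses to $P_{Z|Y}(z|0) = P_{Z|Y}(z|1)$ for every $z$, so $Z$ is independent of $Y$ and $I(Y;Z) = 0$. Since this holds for every filter in the feasible set $\D_0$, we conclude $g_0(X;Y) = 0$.

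I do not anticipate a genuine obstacle. The only step deserving a line of care is the reduction showing that dependence of $X$ and $Y$ implies $P_{Y|X}(\cdot|0) \neq P_{Y|X}(\cdot|1)$ (in particular $P_X(0), P_X(1) > 0$, as otherwise $X$ is deterministic and trivially independent of $Y$). It is worth noting that the argument is sharp for the binary alphabet: as soon as $|\Y| \geq 3$, the analogous linear system has nontrivial kernel and genuinely informative perfect-privacy filters become possible, which foreshadows why the general setting is interesting.
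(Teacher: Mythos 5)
Your proof is correct and follows essentially the same route as the paper's: both set up the $2\times 2$ linear system arising from the Markov chain $X\to Y\to Z$ together with the perfect-privacy condition $P_{Z|X}(\cdot|0)=P_{Z|X}(\cdot|1)$, and both use the dependence of $X$ and $Y$ to force $P_{Z|Y}(\cdot|0)=P_{Z|Y}(\cdot|1)$, hence $I(Y;Z)=0$. Your write-up is in fact a bit more explicit at the one step the paper merely asserts, namely that the system has a unique solution because the coefficients $P_{Y|X}(y|0)-P_{Y|X}(y|1)$ are nonzero negatives of each other.
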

\begin{proof}
  In the perfect privacy regime the constraint set reduces to $\D_0=\{P_{Z|Y}:~ Z\indep X\}$. Since $X$, $Y$ and $Z$ form the Markov chain $X\to Y\to Z$, we can write
  \begin{eqnarray} \label{setequation1}
                              P_{Z|Y}(\cdot|0)P_{Y|X}(0|1)+ P_{Z|Y}(\cdot|1)P_{Y|X}(1|1) &=& P_{Z|X}(\cdot|1) \nonumber\\
                              P_{Z|Y}(\cdot|0)P_{Y|X}(0|0)+ P_{Z|Y}(\cdot|1)P_{Y|X}(1|0) &=& P_{Z|X}(\cdot|0).\nonumber
  \end{eqnarray}
The condition $Z\indep X$ implies that $P_{Z|X}(\cdot|1)=P_{Z|X}(\cdot|0)=P_Z(\cdot)$ and hence from the above,
  \begin{eqnarray} \label{setequation2}
                              P_{Z|Y}(\cdot|0)P_{Y|X}(0|1)+ P_{Z|Y}(\cdot|1)P_{Y|X}(1|1) &=& P_Z(\cdot)\nonumber \\
                              P_{Z|Y}(\cdot|0)P_{Y|X}(0|0)+ P_{Z|Y}(\cdot|1)P_{Y|X}(1|0) &=& P_Z(\cdot).\nonumber
  \end{eqnarray}
From the assumption that $X$ and $Y$  are dependent, it follows that the above system of equations has a unique solution. The unique solution turns out to satisfy $P_{Z|Y}(\cdot|0)=P_{Z|Y}(\cdot|1)=P_{Z}(\cdot)$, which implies that $I(Y;Z)=0$.
\end{proof}
 Note that the theorem does not necessarily hold for non-binary $X$ and $Y$. In fact, it is easy to construct an example for ternary $Y$ and binary $X$ in which $g_0(X;Y)> 0$ (for instance, see Example~\ref{BECexample}). Berger and Yeung \cite[Appendix II]{berger}, gave a necessary condition for $g_0(X;Y)> 0$, in a different context.
 \begin{definition}[\cite{berger}]
 The random variable $X$ is said to be \emph{weakly independent} of $Y$ if the rows of the transition matrix $P_{X|Y}$, i.e., the set of vectors $\{P_{X|Y}(\cdot|y), ~y\in\mathcal{Y}\}$, are linearly dependent.
 \end{definition}
 In \cite{berger}, it is proved that if $X$ is weakly independent of $Y$ then there exists a binary random variable $Z$ such that $Z\indep X$ which is correlated with $Y$, and hence $g_0(X;Y)> 0$. This condition is met, for example, if $|\Y|>|\X|$. It is also straightforward to show that this condition is indeed a necessary and sufficient for $g_0(X;Y)> 0$. It is straightforward to see that if $Y$ is binary then $X$ is weakly independent of $Y$ if and only if $X$ and $Y$ are independent. This together with the fact that weak independence is a necessary and sufficient condition for $g_0(X;Y)>0$, imply the following lemma which generalizes Theorem~\ref{theorem1}.
 \begin{lemma}\label{generalizedtheorem1}
Let $Y$ be a binary random variable. Then $g_0(X;Y)$ is equal to either $H(Y)$ or zero.
\end{lemma}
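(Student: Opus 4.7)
The plan is to dispatch the lemma immediately from the two facts the authors have just recalled: (i) $g_0(X;Y)>0$ iff $X$ is weakly independent of $Y$, and (ii) for binary $Y$, weak independence of $X$ from $Y$ is equivalent to $X\indep Y$. Combining these yields a dichotomy: either $X$ and $Y$ are independent, or $g_0(X;Y)=0$. So the whole argument splits into two cases.

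First I would dispose of the dependent case. Assume $X$ and $Y$ are not independent. Since $Y$ is binary, fact (ii) says $X$ is not weakly independent of $Y$, and then fact (i) forces $g_0(X;Y)=0$. (One could also remark that this is exactly the content of Theorem~\ref{theorem1}, so no additional work is needed.)

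Next I would handle the independent case. If $X\indep Y$, choose the privacy filter $P_{Z|Y}$ to be the identity channel, i.e., $Z=Y$. Then $Z$ is a function of $Y$ alone and, because $X\indep Y$, we have $Z\indep X$, so $P_{Z|Y}\in\D_0$ and $I(X;Z)=0$. For this filter $I(Y;Z)=H(Y)$. Since for any $P_{Z|Y}$ we trivially have $I(Y;Z)\le H(Y)$, this choice is optimal and $g_0(X;Y)=H(Y)$.

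I do not foresee a genuine obstacle here; the lemma is essentially a corollary of the two facts just stated in the paragraph preceding it. The only small point to be careful about is to verify that the identity channel actually lies in $\D_0$ when $X\indep Y$, and to note the matching upper bound $I(Y;Z)\le H(Y)$, so that $H(Y)$ is indeed the maximum rather than merely a lower bound on $g_0(X;Y)$.
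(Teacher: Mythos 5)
Your proof is correct and takes essentially the same route as the paper, which derives the lemma from exactly the two facts you cite (weak independence is necessary and sufficient for $g_0(X;Y)>0$, and for binary $Y$ weak independence reduces to independence); you usefully make explicit the computation $g_0(X;Y)=H(Y)$ in the independent case via the identity filter. One small caveat: your parenthetical appeal to Theorem~\ref{theorem1} for the dependent case only covers binary $X$, whereas the lemma allows arbitrary finite $X$ --- but your main argument via weak independence handles the general case correctly, so nothing is lost.
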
   
\section{A Multi-letter Version of $\tilde{g}_{\eps}(X;Y)$}
We next consider the simplified version of the rate-privacy function $\tilde{g}_{\epsilon}(X;Y)$ defined in \eqref{gtildaepsilon}, in the limit when $\eps\to 0$. Suppose for any $x\in\X$, inducing the distribution $P_{Y|X}(\cdot|x)$ over $\Y$, one takes $n$ independent copies of $Y$ with distribution $P_{Y^n|X}(y^n|x)=\prod_{i=1}^nP_{Y|X}(y_i|x)$. The privacy constraint is that; $I(f(Y^n); X)= \epsilon$ for every $n$ and every deterministic function $f:\Y^n\to \Z$ where $|\Z|\leq |\Y|$. Let $\tilde {g}_{n, \epsilon}(X;Y)$ denote  $\frac{1}{n}\tilde {g}_{\epsilon}(X;Y^n)$ when the distribution $P_{Y^n|X}$ is specified as above, so that
\begin{equation}\label{relaxed}
  \tilde {g}_{n, \epsilon}(X;Y):=\frac{1}{n}\sup_{f:~I(f(Y^n);X)= \epsilon}H(f(Y^n)).
\end{equation}
The following theorem gives an asymptotic lower bound on $\tilde {g}_{n, \epsilon}(X;Y)$.
\begin{theorem}\label{theorem2}
For any pair of random variables $(X,Y)$ with fixed joint distribution $P_{XY}$, we have
$$\lim_{\epsilon\to 0}\lim_{n\to \infty}\frac{1}{n}\tilde {g}_{n, \epsilon}(X;Y)\geq H^*_{\infty}(Y|X),$$
where  the min-entropy is defined as
\begin{equation}\label{hinfinity}
  H^*_{\infty}(Y|X):=\min_{x\in\mathcal{X}}\min_{y\in\mathcal{Y}}\left(-\log P_{Y|X}(y|x)\right).
\end{equation}
\end{theorem}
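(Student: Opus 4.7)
The starting point is the observation that $H^*_{\infty}(Y|X)$ is precisely the conditional min-entropy rate of $Y^n$: by the product structure of $P_{Y^n|X}$, for every $x \in \X$ and every $y^n \in \Y^n$,
\[
P_{Y^n|X}(y^n \mid x) \;=\; \prod_{i=1}^n P_{Y|X}(y_i \mid x) \;\leq\; 2^{-n H^*_{\infty}(Y|X)}.
\]
So, conditionally on any fixed $X = x$, the sequence $Y^n$ has min-entropy at least $n H^*_{\infty}(Y|X)$. This is exactly the hypothesis of a leftover-hash / random-binning argument, and the plan is to use it to extract close to $n H^*_{\infty}(Y|X)$ near-uniform bits from $Y^n$ that are simultaneously nearly independent of $X$, and then take this extractor as the desired function $f$.

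Concretely, fix $R < H^*_{\infty}(Y|X)$, set $M_n = \lfloor 2^{nR}\rfloor$, and consider a uniformly random partition $f:\Y^n \to \{1,\dots,M_n\}$ (equivalently, a hash drawn from a two-universal family). For each $x$, the leftover hash lemma, applied to the source $Y^n \mid X = x$, gives
\[
\mathrm{TV}\!\left( P_{f(Y^n) \mid X=x},\; \mathrm{Unif}\{1,\dots,M_n\} \right) \;\leq\; 2^{-n(H^*_{\infty}(Y|X)-R)/2}
\]
in expectation over the hash. Because $|\X|$ is a fixed constant, a union bound (Markov plus taking $|\X|$ copies) produces a single deterministic $f_n$ achieving total-variation distance $\delta_n = O\bigl(2^{-n(H^*_{\infty}(Y|X)-R)/2}\bigr)$ from uniform for \emph{every} $x$ simultaneously.

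From this near-uniformity, two standard continuity estimates finish the core argument. Using the Fano-type bound $|H(P) - H(Q)| \leq \delta\log|\mathrm{supp}| + H_2(\delta)$, one gets both $H(f_n(Y^n)) \geq \log M_n - \delta_n \log M_n - H_2(\delta_n)$ and the same bound on every conditional entropy $H(f_n(Y^n) \mid X = x)$. Averaging yields
\[
\frac{1}{n} H(f_n(Y^n)) \;\longrightarrow\; R, \qquad I(f_n(Y^n);X) \;\leq\; 2\delta_n \log M_n + 2H_2(\delta_n) \;\longrightarrow\; 0,
\]
since $\delta_n \cdot nR$ decays to zero (exponential decay dominates the linear factor $n$).

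The remaining nuisance is that \eqref{relaxed} imposes the equality constraint $I(f(Y^n);X) = \eps$ rather than the inequality $\leq \eps$. I would handle this by augmenting $f_n$ with a coarse function $\phi(Y_1)$ of the first coordinate, choosing $\phi$ (from the finite family of maps on $\Y$) so that the composite $(f_n(Y^n),\phi(Y_1))$ achieves mutual information equal or arbitrarily close to $\eps$ with $X$; this adjustment adds at most $\log|\Y|$ to the output entropy and therefore contributes nothing to the rate. This yields $\lim_n \tilde g_{n,\eps}(X;Y) \geq R$ for every $R < H^*_{\infty}(Y|X)$ and every $\eps > 0$, after which the outer limit $\eps \to 0$ and then $R \uparrow H^*_{\infty}(Y|X)$ give the claim. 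The main obstacle is this last tuning step — pinning the mutual information at exactly $\eps$ while preserving the rate — the extraction itself being a textbook application of LHL-style reasoning.
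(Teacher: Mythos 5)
Your proposal is correct in substance but reaches the conclusion by a genuinely different construction. The paper builds the extractor by hand: for each $x_j$ it greedily agglomerates mass points of $P_{Y^n|X}(\cdot|x_j)$ into $2^r$ bins of probability nearly $2^{-r}$ (using exactly the same per-point bound $P_{Y^n|X}(y^n|x)\le 2^{-nH^*_\infty(Y|X)}$ that you start from), and then, because the resulting binning depends on $j$, it must glue the $m$ binnings into a single function of $y^n$ by first running a channel decoder $\psi_n$ that recovers $x_j$ from $y^n$ with vanishing error. Your route replaces both steps with a two-universal hash plus the leftover hash lemma and a union bound over the $|\X|$ conditional sources, which yields a single deterministic $f_n$ that is simultaneously close to uniform under every $P_{Y^n|X}(\cdot|x)$; you then pass from total variation to entropy via the continuity bound, which is precisely the role played by the paper's invocation of Csisz\'{a}r--K\"{o}rner Lemma 2.7. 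Your approach buys two things: it sidesteps the decoder step entirely (which in the paper implicitly requires the distributions $P_{Y|X}(\cdot|x)$ to be pairwise distinct and whose error events are not propagated through the subsequent total-variation estimates), and the derandomization over the finite $\X$ is cleaner than maintaining $m$ parallel greedy binnings. What the paper's explicit binning buys is constructiveness and a transparent accounting of where the slack $2^{-s}$ goes. On the one remaining issue you flag --- the equality constraint $I(f(Y^n);X)=\eps$ in \eqref{relaxed} --- note that your tuning device cannot literally hit an arbitrary $\eps$, since deterministic maps on a finite alphabet realize only finitely many mutual information values for each $n$; but the paper's own proof has exactly the same defect (its $f_n$ only satisfies an upper bound on $I(f_n(Y^n);X)$), so this is a flaw of the problem formulation (the intended constraint is evidently $I(f(Y^n);X)\le\eps$) rather than of your argument, and under that reading both proofs are complete.
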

\begin{proof}
Suppose $|\X|=m$ and each $x_j\in\X$, $j=1,\dots,m$, induces the product distribution $P_{j}^n(y^n):=P_{Y^n|X}(y^n|x_j)=\prod_{k=1}^nP_{Y|X}(y_k|x_j)$ over $\Y$. Given these $m$ distributions $P_j^n$ for $j=1, 2, \dots, m$, we construct nearly equiprobable bins $K^n_j(i)\subset \Y^n$ for $i=1, 2, \dots, 2^r$, (with $r$ to be determined later), such that $P_j^n(K^n_j(i)):=P_j^n(Y^n\in K_j^n(i))$ is close to $2^{-r}$ for each $j=1, 2, \dots, m$ and $i=1, 2, \dots, 2^r$. Let $U^r$ denote the uniform distribution over $\{0,1\}^r$ and $V(P,Q)$ denote the total variation distance between distributions $P$ and $Q$.\\ \indent
Note that each bin $K_j^n(i)$ is an agglomeration of some mass points of $P_j^n(y^n)$ for each $j=1, 2, \dots, m$ and therefore the probability of each bin is equal to the sum of the probabilities of points $y^n$ it contains.
Recalling the definition of $H_{\infty}^*(Y|X)$ in \eqref{hinfinity}, we can write
\begin{equation}\label{assumption1}
  P_j^n(y^n)\leq 2^{-nH_{\infty}^*(Y|X)}, \qquad j=1, 2, \dots, m.
\end{equation}
We start the construction of the bins $K^n_j(1), K_j^n(2), \dots, K_j^n(J_j)$ for each $j=1, 2, \dots, m$ where $J_j\leq 2^r-1$ is the number of bins for each $j$. The first bin is constructed as follows. We agglomerate the minimal number of mass points of $P_j^n$ into $K_j^n(1)$ as needed to make sure
\begin{equation}\label{lowerbound}
  P_j^n(K_n(1))\geq 2^{-r}-2^{-s},
\end{equation}
for some $s<nH_{\infty}^*(Y|X)$. This together with \eqref{assumption1} shows that
\begin{equation}\label{upperbound}
  P_1^n(K_n(1))< 2^{-r}-2^{-s}+2^{-nH_{\infty}^*(Y|X)},
\end{equation}
which can be simplified as
\begin{equation}\label{upperbound2}
  P_1^n(K_n(1))< 2^{-r},
\end{equation}
because $s<nH_{\infty}^*(Y|X)$. \\ \indent
Once condition \eqref{lowerbound} is met, the construction for the first bin is completed and we move on to the second bin. This procedure can go on until either we run out of mass points or the restriction  $J_j\leq 2^r-1$ is violated. In the latter case, we set $J_j=2^r-1$ and then collect all mass points left into the bin $K_j^n(J_j+1)$. The former happens if the total probability of the left-over is strictly less than $ 2^{-r}-2^{-s}$ so that we can not meet the requirement \eqref{lowerbound}, in other words,
\begin{equation}\label{case1}
    P_j^n\left(\bigcup_{i=1}^{J_j} K_j^n(i)\right)>1- 2^{-r}+2^{-s}.
\end{equation}
On the other hand, we know from \eqref{upperbound2} that $P_j^n\left(\bigcup_{i=1}^{J_j} K_j^n(i)\right)<J_j2^{-r}$ which, together with \eqref{case1}, implies
\begin{equation}\label{boundforcase1}
    1- 2^{-r}+2^{-s}<P_j^n\left(\bigcup_{i=1}^{J_j} K_j^n(i)\right)<J_j2^{-r},
\end{equation}
leading to a lower bound for the number of bins in this case
\begin{equation}\label{Jlowerbound}
   J_j>2^r+2^{r-s}-1,
\end{equation}
which is greater than the allowable upper-bound $2^r-1$. We can hence conclude that with $s$ that satisfies $s<nH_{\infty}^*$, the procedure stops only when the restriction $J_j\leq 2^r-1$ is violated, and therefore, we assume $J_j=2^r-1$ in what follows.\\ \indent
As specified earlier, we construct the last bin $K(J_j+1)$ by including all the leftover mass there. We therefore have
\begin{equation}\label{lastbin}
  K_j^n(J_j+1)=\text{supp}\{P_j^n\}-\bigcup_{i=1}^{J_j}K_j^n(i),
\end{equation}
where $\text{supp}\{P_j^n\}$ denotes the support of $P_j^n$.
Since each bin has probability lower-bounded by \eqref{lowerbound}, it follows from \eqref{lastbin} that
\begin{equation}\label{boundforlastbin}
P_j^n(K(J_j+1))=1-\sum_{i=1}^{J_j} P_j^n(K_j^n(i))\leq 1-J_j\left(2^{-r}-2^{-s}\right),
\end{equation}
which, after substituting $J_j=2^r-1$, is simplified as
\begin{equation}\label{boundforlastbin2}
  P_j^n(K(J_j+1))\leq 2^{r-s}+2^{-r}-2^{-s}.
\end{equation}

So far we have constructed $m\times 2^r$ bins, namely $2^r$ bins for each $P_j^n$, $j=1,2, \dots, m$. Consider now the deterministic mapping $g_n:\mathcal{Y}^n\times \mathcal{X}\to\{0,1\}^r$ defined as follows:
$$g_n(y^n, x_j)= i \qquad ~\text{if} \qquad y^n\in K_j^n(i).$$
This mapping requires $x_j$ because for each $j\in \{1,2, \dots, m\}$ the corresponding bins are disjoint. However, we know that by using a proper channel encoding and decoding, $\phi_n$ and $\psi_n$, respectively, one can decode $Y^n$ to obtain $\psi_n(Y^n)$ such that $P(X\neq \psi_n(Y^n))$ decays exponentially. So, we can have a deterministic function which acts only on $Y^n$ from which $x_j$ is obtained with probability exponentially close to one. Hence our sequence of deterministic mappings is:
$$f_n(y^n):=g_n(y^n, \psi_n(y^n))=i\qquad ~\text{if} \qquad y^n\in K_j^n(i).$$
where $j$ is the index of the decoded symbol, that is the $j$ such that $\psi_n(y^n)=x_j$.\\ \indent
Now let us look at the total variation distance between $\tilde{P}_j^n:=f_n\circ P^n_j$ and $U^r$ which is the uniform distribution over the set $\{0,1\}^r$.
\begin{eqnarray}
  V\left(\tilde{P}_j^n, U^r\right) &=& \sum_{i=1}^{2^r}|2^{-r}-P_j^n(K_j^n(i))|\nonumber\\
   &= &  \sum_{i=1}^{J_j}\left(2^{-r}-P_j^n(K_j^n(i))\right)\nonumber\\
   &&  +|2^{-r}-P_j^n(K_j^n(J_j+1))|\label{firstline}\\
 & \leq &  \sum_{i=1}^{J_j}2^{-s}+2^{-r}+ P_j^n(K_n(J_j+1))\label{secondline}\\
& \leq & J_j2^{-s} + 2^{-r} + 2^{r-s}+2^{-r}-2^{-s}\label{thirdline}\\
&=& 2\left(2^{r-s}+2^{-r}-2^{-s}\right)<2\left(2^{r-s}+2^{-r}\right)\nonumber.
\end{eqnarray}
where in \eqref{firstline} we use \eqref{upperbound2}, in \eqref{secondline} we use the triangle inequality and \eqref{lowerbound} and the inequality in \eqref{thirdline} follows from  \eqref{boundforlastbin2}. To make sure that $V(\tilde{P}_j^n, U^r)$ goes to zero as $n\to\infty$, we set $r=nH_{\infty}^*(Y|X)-n\delta$ and $s=nH_{\infty}^*(Y|X)-n\frac{\delta}{2}$ for some $0<\delta\leq \frac{2}{3}H_{\infty}^*(Y|X)$.
Hence we can make sure that  $\tilde{P}_j^n$ and $\tilde{P}_k^n$ for $j\neq k$ are at most  $2\epsilon$-distant in the total variation sense. This is because, for large $n$
$$V(\tilde{P}_j^n, \tilde{P}_k^n)\leq V(\tilde{P}_j^n, U^r)+V(\tilde{P}_k^n, U^r)\leq 2\epsilon.$$
Note that, letting $E_X[\cdot]$ denote the expectation with respect to $X$,  we have in general
\begin{eqnarray*}
  V\left(P_{ZX}, P_ZP_X\right) &=&E_X\left[V\left(P_{Z|X}(\cdot|X),P_Z\right)\right], \\
   &=& E_X\big[V\left(P_{Z|X}(\cdot|X),E_X[P_{Z|X}(\cdot|X)]\right)\big],
\end{eqnarray*}
and hence by Jensen's inequality \begin{eqnarray*}
  V(P_{ZX}, P_ZP_X)\leq \sum_{x}\sum_{x'}P_X(x)P_X(x') V\left(P_{Z|X}(\cdot|x), P_{Z|X}(\cdot|x')\right)
\end{eqnarray*}
We can therefore conclude that  $V(\tilde{P}_j^n, \tilde{P}_k^n)\leq 2\epsilon$ for all $j\neq k$ results in the following $$V\left(P_{Z_nX},P_{Z_n}P_X\right)\leq 2\epsilon,$$ where $Z_n=f_n(Y^n)$. In other words, $Z_n$ and $X$ are "$2\epsilon$-independent" for sufficiently large $n$ in sense of total variation distance. Invoking \cite[Lemma 2.7]{csiszarbook}, the theorem follows.
\end{proof}

This theorem implies that, unlike in the binary case studied in Theorem~\ref{theorem1}, one can have information transfer at a positive rate while allowing perfect privacy only in the limit instead of requiring absolutely zero privacy leakage.

\section{Non-Private Information vs. the Rate-Privacy Function}
Conceptually, $g_{\eps}(X;Y)$ quantifies the "largest" part of $Y$ which carries $\eps$ amount of information about $X$. Witsenhausen \cite{Witsenhausen:valuesandbounds} defined the \emph{private information} of a pair of random variables $(X,Y)$ as
\begin{equation}\label{privateinformation}
  M(X;Y):=\max_{W:~X\to W\to Y}H(X,Y|W).
\end{equation}
Wyner \cite{wynerCI} defined the \emph{common information} of $X$ and $Y$ as
\begin{equation}\label{wynerCI}
  C_W(X;Y):=\min_{W:~X\to W\to Y}I(X,Y;W).
\end{equation}

Clearly, the definition of private information in \eqref{privateinformation} implies  $C_W(X;Y)=H(X,Y)-M(X;Y)$. Operationally, $M(X;Y)$ is the rate of information that one needs to transmit over two "non-common" channels when $C_W(X;Y)$ is transmitted over the common channel in order to be able to decode $X$ and $Y$ with arbitrarily small error probability. This definition is not immediately useful in our setting, as it is symmetric in $X$ and $Y$. We seek an asymmetric definition for the private information that $Y$ contains, i.e., the rate of information contained in $Y$ which correlates with $X$. Inspired by Wyner's common information, $C_W(X;Y)$, and G\'{a}cs-K\"{o}rner's common information \cite{gacskornerCI}, denoted by $C_{GK}(X;Y)$, we define the \emph{private information about $X$ carried by $Y$} as follows
\begin{equation}\label{privateinformation2}
C_{X}(Y):=\min_{\small{\substack{W: X\to W\to Y\\H(W|Y)=0}}}H(W),
\end{equation}
and similar to the connection between $C_W(X;Y)$ and $M(X;Y)$, we define $D_X(Y):=H(Y)-C_X(Y)$ and call it the \emph{non-private information about $X$ carried by $Y$}. The quantity $C_X(Y)$ as defined above is similar to the so-called \emph{necessary conditional entropy}, $H(Y\dagger X)$, defined by Cuff et al.\ \cite{coordinationcapacity} as $\min H(W|X)$ where the minimum is taken over $W$ that satisfies the same conditions as in \eqref{privateinformation2}. Conceptually, we decompose the information contained in $Y$ into two parts, namely, one part which correlates with $X$, denoted by $C_X(Y)$, and another part which has no correlation with $X$, denoted by $D_X(Y)$. Using the assumption $H(W|Y)=0$ in \eqref{privateinformation2}, we can obtain the following variational representation for $D_X(Y)$:
\begin{equation}\label{privateinformation3}
D_{X}(Y)=\max_{\small{\substack{W: X\to W\to Y\\H(W|Y)=0}}}H(Y|W).
\end{equation}
\begin{remark}\label{remarkonequivalence}
Since $H(W|Y)=0$ implies that $W$ is a function of $Y$, one can show that the constraint in the above maximization, i.e., the conditions  $X\to W\to Y$ and $H(W|Y)=0$, is equivalent to the "double Markov relations" $X\to W\to Y$ and $X\to Y\to W$.
\end{remark}
The so called \emph{exact common information} has been introduced in \cite{exactCI} and shown to be related to the problem of \emph{exact} generation of a joint distribution $P_{XY}$. The exact common information is defined as the minimum rate $R^*$ at which an external randomness must be supplied to physically separated agents, each responsible for one of the marginals via the private randomness, so that they are able to \emph{exactly} reproduce joint distribution $P_{XY}$, in an asymptotic formulation. As illustrated in Fig.~\ref{fig:exactgeneration}, the exact common information is the minimum rate of generating $W$ such that two independent processors construct $\hat{X}$ and $\hat{Y}$, using $W$ as an input of separate stochastic decoders, such that  $P_{\hX\hat{Y}}=P_{XY}$.
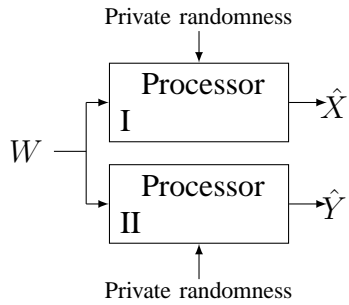
\begin{figure}[H]
\centering
\begin{tikzpicture}
         \draw (0.3, -0.65)  node[text width=2.1cm,draw](node1) {~~Processor I};
         \draw (0.3, -2)  node[text width=2.1cm,draw](node2) {~~Processor II};
         \node at (-2,-1.3) (w) {$W$};
         \coordinate (point1) at (-1.2,-1.3);
         \coordinate (point2) at (-1.2,-0.65);
         \coordinate (point3) at (-1.2,-2);

         \path [arrow] (point2) -- (node1);
         \path [arrow] (point3) -- (node2);
         \path [line] (point2) -- (point3);
         \path [line]  (w)--(point1);
         \path [arrow] (node1.east) -- (2, -0.65);
         \path [arrow] (node2.east) -- (2, -2);
         \node at (2.1,-0.65) (x) {$\hX$};
         \node at (2.1,-2) (y) {$\hat{Y}$};
         \path [arrow] (0.3, 0.3) -- (node1.north);
         \path [arrow] (0.3, -3) -- (node2.south);
         \node at (0.3,0.5) (p1) {\footnotesize{Private randomness}};
         \node at (0.3,-3.15) (p2) {\footnotesize{Private randomness}};
\end{tikzpicture}
\caption{Exact distribution generation.} \label{fig:exactgeneration}
\end{figure}

A new quantity is then introduced in \cite{exactCI}, so called \emph{common entropy} defined by
\begin{equation}\label{commonentropy}
  G(X;Y):=\min_{W: X\to W\to Y}H(W),
\end{equation}
and shown that $R^*=\lim_{n\to\infty}\frac{1}{n}G(X^n;Y^n)$. 
Operationally, $C_{X}(Y)$ is the exact common information for a setting similar to Fig.~\ref{fig:exactgeneration}, except that the common input to each processor is assumed to be a deterministic function of $Y$ as depicted in Fig.~\ref{fig:exactgeneration2}.
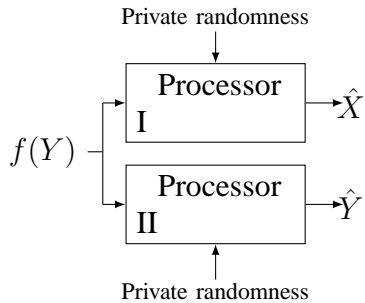
\begin{figure}[h!]
\centering
\begin{tikzpicture}
         \draw (0.3, -0.65)  node[text width=2.1cm,draw](node1) {~~Processor I};
         \draw (0.3, -2)  node[text width=2.1cm,draw](node2) {~~Processor II};
         \node at (-2,-1.3) (w) {$f(Y)$};
         \coordinate (point1) at (-1.2,-1.3);
         \coordinate (point2) at (-1.2,-0.65);
         \coordinate (point3) at (-1.2,-2);

         \path [arrow] (point2) -- (node1);
         \path [arrow] (point3) -- (node2);
         \path [line] (point2) -- (point3);
         \path [line]  (w)--(point1);
         \path [arrow] (node1.east) -- (2, -0.65);
         \path [arrow] (node2.east) -- (2, -2);
         \node at (2.1,-0.65) (x) {$\hX$};
         \node at (2.1,-2) (y) {$\hat{Y}$};
        \path [arrow] (0.3, 0.3) -- (node1.north);
         \path [arrow] (0.3, -3) -- (node2.south);
         \node at (0.3,0.5) (p1) {\footnotesize{Private randomness}};
         \node at (0.3,-3.15) (p2) {\footnotesize{Private randomness}};
\end{tikzpicture}
\caption{Exact asymmetric distribution generation.} \label{fig:exactgeneration2}
\end{figure}

\subsection{Properties of $C_{X}(Y)$}
\begin{enumerate}
\item For any $(X,Y)$ with joint distribution $P_{XY}$, we have
    \begin{equation}\label{property1}
        I(X;Y)\leq C_W(X;Y)\leq G(X;Y)\leq C_X(Y)\leq H(Y).
    \end{equation}
\begin{proof}
The first and the second inequalities are shown respectively in \cite{wynerCI} and \cite{exactCI}. The third one becomes clear once we examine the definitions of $G(X;Y)$ and $C_X(Y)$. Indeed, the objective functions in the minimization are equal, however, the constraint set for $C_X(Y)$  is a subset of the constraint set for $G(X;Y)$.   The last inequality follows from the fact that $Y$ belongs to the constraint set as well.
 \end{proof}
~~~Note that $C_X(Y)=I(X;Y)$ implies that $C_W(X;Y)=I(X;Y)=C_X(Y)$. It is a well-known fact that $C_W(X;Y)=I(X;Y)$ is equivalent to $C_{GK}(X;Y)=I(X;Y)$. Thus, $C_X(Y)=I(X;Y)$ implies that $C_{GK}(X;Y)=C_W(X;Y)$. As Wyner \cite[p. 166]{wynerCI} pointed out, these two notions of common information are equal if and only if  it is possible to write $X=(X', V)$ and $Y=(Y', V)$ such that $X'$ and $Y'$ are conditionally independent given $V$. Hence $C_X(Y)=I(X;Y)$ implies this decomposition. For the converse, suppose that we have the decomposition $X=(X', V)$ and $Y=(Y', V)$ such that $X'\to V\to Y'$. It is easy to show that for any random variable $W$ that satisfies $X\to W\to Y$ and $H(W|Y)=0$, there exits a deterministic function $f$ such that $V=f(W)$ with probability one. Hence, on the one hand, $$\max_{\substack{W: X\to W\to Y\\H(W|Y)=0}}H(Y|W)\leq H(Y|V),$$ and on the other hand, since $V$ also satisfies both conditions of $W$, we have $$\max_{\substack{W: X\to W\to Y\\H(W|Y)=0}}H(Y|W)\geq H(Y|V),$$ and therefore, $D_X(Y)=H(Y|V)=H(Y|X)$ and consequently $C_X(Y)=I(X;Y)$.
\item $C_X(Y)=0$ if and only if $X\indep Y$.
\begin{proof}
Suppose $C_X(Y)=0$. By the first inequality in \eqref{property1}, we have $I(X;Y)=0$ which implies $X\indep Y$. Conversely, if $X\indep Y$, then we have the following trivial Markov chain $X\to c\to Y$ for any constant $c$. This implies $C_X(Y)=0$.
\end{proof}
\item (\emph{Data-processing inequality}) For any $U$ such that $U\to X\to Y$, we have $C_U(Y)\leq C_X(Y)$.
\begin{proof}
Let $W^*$ attain the $C_X(Y)$. Hence we have the Markov chain $U\to X\to W^*\to Y$ and also $H(W^*|Y)=0$. It then follows by the definition that $C_U(Y)\leq H(W^*)=C_X(Y)$.
\end{proof}
\end{enumerate}
\subsection{Calculation of $D_X(Y)$}
In this section we solve the maximization in the definition of $D_X(Y)$. To do this,  we need a definition which also appears in \cite{coordinationcapacity}, \cite{newdualtoGacCI} and \cite{CIandSK}.

\begin{definition}\label{defnition3}
Given two random variables $X\in\mathcal{X}$ and $Y\in\mathcal{Y}$, let $T^{\X}:\Y\to \mathcal{P}(\mathcal{X})$ be defined by $y\to P_{X|Y}(\cdot|y)$ where $\mathcal{P}(\mathcal{X})$ is the simplex of probability distribution on $\mathcal{X}$.
\end{definition}
To solve the maximization in the definition of $D_X(Y)$, we need the following two lemmas from \cite{zeroerrorinformation}.
\begin{lemma}[\cite{zeroerrorinformation}]\label{lemma1}
The random variable $T^{\X}(Y)$ satisfies the Markov chain $X\to T^{\X}(Y)\to Y$.
\end{lemma}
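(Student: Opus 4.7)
The plan is to reduce the Markov chain claim to a direct calculation of conditional distributions, exploiting the fact that $T := T^{\X}(Y)$ is by definition a deterministic function of $Y$. Because $T$ is a function of $Y$, the nontrivial content of the Markov chain $X \to T \to Y$ is the conditional independence statement $P(X=x \mid Y=y, T=t) = P(X=x \mid T=t)$ for every triple $(x,y,t)$ with positive probability. I would verify this pointwise by computing both sides and showing they each equal $t(x)$.

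First I would note that since $T$ is a deterministic function of $Y$, conditioning on $\{Y=y, T=t\}$ is the same as conditioning on $\{Y=y\}$ (with $t = T^{\X}(y)$), so
\begin{equation*}
P(X = x \mid Y = y, T = t) \;=\; P(X = x \mid Y = y) \;=\; T^{\X}(y)(x) \;=\; t(x),
\end{equation*}
where the second equality is just the definition of $T^{\X}$ and the last uses $T^{\X}(y) = t$. The key takeaway is that this conditional probability depends on $y$ only through $T^{\X}(y)$.

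Next I would compute the right-hand side by the law of total probability:
\begin{equation*}
P(X = x \mid T = t) \;=\; \sum_{y:\, T^{\X}(y) = t} P(Y = y \mid T = t)\, P(X = x \mid Y = y, T = t) \;=\; \sum_{y:\, T^{\X}(y) = t} P(Y = y \mid T = t)\, t(x) \;=\; t(x),
\end{equation*}
since the sum over the conditional probabilities of $Y$ given $T = t$ is one. Comparing with the previous display gives $P(X=x \mid Y=y, T=t) = P(X=x \mid T=t)$ for all $(x,y,t)$ in the support of $(X,Y,T)$, which is exactly $X \indep Y \mid T$, i.e., the Markov chain $X \to T^{\X}(Y) \to Y$.

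I do not anticipate a real obstacle here: the statement is essentially the observation that the conditional distribution $P_{X|Y}(\cdot\mid y)$ acts as a sufficient statistic of $Y$ for $X$. The only mild care needed is to keep track of the fact that $T^{\X}$ takes values in the simplex $\mathcal{P}(\X)$, but since $\Y$ is finite, $T^{\X}(Y)$ has finite support and all conditional probabilities are well defined on events of positive probability, so the above elementary manipulation suffices.
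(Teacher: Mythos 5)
Your proof is correct. The paper itself gives no proof of this lemma---it is simply cited from the reference---so there is nothing to compare against, but your direct computation (both $P(X=x\mid Y=y,T=t)$ and $P(X=x\mid T=t)$ equal $t(x)$ because $T$ is a deterministic function of $Y$ whose value \emph{is} the conditional law of $X$ given $Y=y$) is exactly the standard argument for why $T^{\X}(Y)$ is a sufficient statistic, and it is complete as written.
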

This lemma shows that the random variable $T^{\X}(Y)$ is basically a sufficient statistics of $Y$ with respect to $X$.
\begin{lemma}[\cite{zeroerrorinformation}]\label{lemma2}
Let $X, Y$ and $V$ form a Markov chain, $X\to V\to Y$ and also $H(V|Y)=0$. Then there exists a deterministic function $g$, such that $T^{\X}(Y)=g(V)$ with probability one.
\end{lemma}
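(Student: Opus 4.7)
My plan is to exploit the two hypotheses separately and then combine them. First I would use $H(V|Y)=0$ to write $V = h(Y)$ almost surely for some deterministic function $h$. In particular, for every $y$ in the support of $Y$, conditioning on $Y=y$ also fixes $V = h(y)$, so the joint conditional distribution collapses to $P_{X|Y,V}(\cdot\mid y, h(y)) = P_{X|Y}(\cdot\mid y)$.

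Next I would rewrite the Markov chain $X\to V\to Y$ as the conditional independence $X \indep Y \mid V$, which gives $P_{X|V,Y}(\cdot\mid v,y) = P_{X|V}(\cdot\mid v)$ on the joint support of $(V,Y)$. Combining this with the previous identity yields the pointwise equality
\begin{equation*}
P_{X|Y}(\cdot\mid y) \;=\; P_{X|V}(\cdot\mid h(y))
\end{equation*}
for every $y$ with $P_Y(y)>0$. The conclusion then follows immediately: define $g(v) := P_{X|V}(\cdot\mid v)$ on the support of $V$ (and arbitrarily elsewhere), so that $T^{\X}(Y) = P_{X|Y}(\cdot\mid Y) = g(h(Y)) = g(V)$ with probability one.

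I do not anticipate a genuine obstacle; the argument is bookkeeping. The only mild subtlety is restricting attention to the relevant supports, since the hypotheses and conclusion are almost-sure statements and values of $h$ and $g$ off-support are immaterial. The crucial structural point is that "$V$ is a function of $Y$" upgrades the Markov-chain identity, which ordinarily only relates conditionals given $V$, into an identity between conditionals given $Y$ and given $V$—and this is exactly what is needed to express the sufficient-statistic random variable $T^{\X}(Y)$ as a deterministic function of $V$.
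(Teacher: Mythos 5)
Your proof is correct; the paper itself imports this lemma from \cite{zeroerrorinformation} without giving a proof, and your argument --- using $H(V|Y)=0$ to write $V=h(Y)$ almost surely and then combining this with the Markov identity $P_{X|V,Y}=P_{X|V}$ to obtain $P_{X|Y}(\cdot\mid y)=P_{X|V}(\cdot\mid h(y))$ on the support of $Y$ --- is exactly the standard argument for this fact. The support bookkeeping you flag is the only delicate point and you handle it correctly, so nothing is missing.
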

This lemma together with Lemma~\ref{lemma1} implies that $T^{\X}(Y)$ is the \emph{minimal} sufficient statistics of $Y$ with respect to $X$, i.e., all other sufficient statistics of $Y$ are a function of $T^{\X}(Y)$.

The following theorem shows that $T^{\X}(Y)$ solves the minimization in the definition of $C_X(Y)$.
\begin{theorem} \label{theorem3}
For any pair of random variables $(X,Y)$ with joint distribution $P_{XY}$, we have  $$D_X(Y)=H(Y)-H(T^{\X}(Y)).$$
\end{theorem}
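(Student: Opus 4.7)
The plan is to prove the equivalent identity $C_X(Y) = H(T^{\X}(Y))$, from which the claim follows immediately since $D_X(Y) = H(Y) - C_X(Y)$ by definition.

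First I would establish the upper bound $C_X(Y) \leq H(T^{\X}(Y))$ by producing an explicit feasible candidate for the minimization in \eqref{privateinformation2}. Take $W = T^{\X}(Y)$. Lemma~\ref{lemma1} gives exactly the Markov relation $X \to T^{\X}(Y) \to Y$, and since $T^{\X}(Y)$ is by construction a deterministic function of $Y$, we automatically have $H(T^{\X}(Y) \mid Y) = 0$. Hence $T^{\X}(Y)$ lies in the feasible set of \eqref{privateinformation2}, and the minimum is at most $H(T^{\X}(Y))$.

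Next I would establish the matching lower bound $C_X(Y) \geq H(T^{\X}(Y))$. Let $W$ be any random variable satisfying the constraints $X \to W \to Y$ and $H(W \mid Y) = 0$. Lemma~\ref{lemma2} then guarantees a deterministic function $g$ with $T^{\X}(Y) = g(W)$ almost surely. Since entropy cannot increase under deterministic processing,
\begin{equation*}
H(W) \geq H(g(W)) = H(T^{\X}(Y)).
\end{equation*}
Taking the minimum over all admissible $W$ yields $C_X(Y) \geq H(T^{\X}(Y))$.

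Combining the two bounds gives $C_X(Y) = H(T^{\X}(Y))$, and therefore
\begin{equation*}
D_X(Y) = H(Y) - C_X(Y) = H(Y) - H(T^{\X}(Y)),
\end{equation*}
as claimed. Conceptually, the whole argument reduces to the observation that $T^{\X}(Y)$ is the \emph{minimal} sufficient statistic of $Y$ for $X$: Lemma~\ref{lemma1} shows it is sufficient (hence feasible), and Lemma~\ref{lemma2} shows it is minimal (hence entropy-minimizing). There is no real obstacle here beyond invoking the two lemmas in the right order; the only thing to be slightly careful about is verifying that the feasibility constraint in \eqref{privateinformation2} coincides with the hypotheses of Lemma~\ref{lemma2}, which it does.
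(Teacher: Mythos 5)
Your proof is correct and follows essentially the same route as the paper: Lemma~\ref{lemma1} establishes that $T^{\X}(Y)$ is feasible for the minimization defining $C_X(Y)$, and Lemma~\ref{lemma2} establishes that any feasible $W$ satisfies $H(W)\geq H(T^{\X}(Y))$, so the minimum is attained there. You spell out the two bounds slightly more explicitly than the paper does (in particular the step $H(W)\geq H(g(W))$), but the argument is the same.
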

\begin{proof}
Since $H(Y|W)=H(Y)-H(W)$ for all $W$ that satisfies the condition $H(W|Y)=0$, we will show that $T^{\X}(Y)$ minimizes $H(W)$ over all $W$ that satisfies Markov chain  $X\to W\to Y$ and $H(W|Y)=0$. Note that Lemma~\ref{lemma1} shows that $T^{\X}(Y)$ belongs to the constraint set of the maximization in the theorem and Lemma~\ref{lemma2} shows that $T^{\X}(Y)$ has the smallest entropy among all the random variables in the constraint set. These two lemmas therefore together  imply that $H(Y|W)$ attains its maximum value at $W=T^{\X}(Y)$.
\end{proof}
As mentioned before, $C_X(Y)\leq H(Y)$. From the previous theorem we can now give the condition under which  $C_X(Y)=H(Y)$. Assume that $\Y=\{y_1,y_2,\dots,y_m\}$.
\begin{lemma}\label{nonsingulat-Lemma}
$C_X(Y)=H(Y)$ if and only if there exists no $y_1,y_2\in\Y$ such that $P_{X|Y}(\cdot|y_1)=P_{X|Y}(\cdot|y_2)$.
\end{lemma}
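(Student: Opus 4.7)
My plan is to reduce the claim directly to Theorem~\ref{theorem3} and to the fact that $T^{\X}(Y)$ is a deterministic function of $Y$. Writing $C_X(Y) = H(Y) - D_X(Y)$ and invoking Theorem~\ref{theorem3}, I get $C_X(Y) = H(T^{\X}(Y))$. So the condition $C_X(Y) = H(Y)$ becomes $H(T^{\X}(Y)) = H(Y)$, and the whole lemma reduces to characterizing when the map $y \mapsto P_{X|Y}(\cdot|y)$ is information-preserving on the support of $Y$.

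First I would observe that, by Definition~\ref{defnition3}, $T^{\X}(Y)$ is a (deterministic) function of $Y$, so $H(T^{\X}(Y)) \le H(Y)$ always holds, with equality if and only if the map is injective on the support of $P_Y$. I would then unpack that injectivity: it fails exactly when there exist $y_1 \ne y_2$ in $\Y$ with $P_Y(y_1), P_Y(y_2) > 0$ (we may assume all $y$ lie in the support without loss of generality) such that $P_{X|Y}(\cdot|y_1) = P_{X|Y}(\cdot|y_2)$, which is precisely the condition in the statement.

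For the forward direction, assume the condition holds, i.e.\ $T^{\X}$ is injective on $\Y$. Then $Y$ is itself a deterministic function of $T^{\X}(Y)$, so $H(Y) = H(T^{\X}(Y)) = C_X(Y)$. For the reverse direction, assume the condition fails, so two distinct symbols $y_1, y_2 \in \Y$ induce the same conditional distribution on $X$. Then $T^{\X}$ identifies $y_1$ and $y_2$ into a single value, giving $H(T^{\X}(Y)) < H(Y)$, hence $C_X(Y) < H(Y)$.

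The only subtlety, and the step I would be most careful about, is the trivial-but-necessary reduction that symbols $y \in \Y$ with $P_Y(y) = 0$ can be ignored in the statement (since they contribute nothing to $H(Y)$ and have undefined $P_{X|Y}(\cdot|y)$); after that restriction the equivalence between injectivity of $T^{\X}$ and $H(T^{\X}(Y)) = H(Y)$ is immediate, so I do not expect any further obstacle.
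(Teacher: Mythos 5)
Your proof is correct and follows essentially the same route as the paper: both directions rest on Theorem~\ref{theorem3} (equivalently, the minimality of $T^{\X}(Y)$), with the failure of the condition forcing a strictly coarser feasible $W$ and hence $C_X(Y)<H(Y)$. The only cosmetic difference is that the paper's converse explicitly constructs the merged variable $\tilde{Y}$ and derives a contradiction, whereas you read the same collapse directly off the identity $C_X(Y)=H(T^{\X}(Y))$; your handling of zero-probability symbols is a reasonable (and needed) clarification of the statement.
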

\begin{proof}
 From Theorem~\ref{theorem3}, it is easy to see that if for all $y\in\Y$, $P_{X|Y}(\cdot|y)$ are different, then $C_X(Y)=H(Y)$. Conversely, suppose that $W^*$ attains $C_X(Y)$ and also suppose $H(W^*)=H(Y)$. Assume that there exist $y_1$ and $y_2$ such that $P_{X|Y}(\cdot|y_1)=P_{X|Y}(\cdot|y_2)$. Then define a new random variable $\tilde{Y}$ which takes on values on set $\{y', y_3, \dots, y_m\}$ with probabilities $(P_{Y}(y_1)+P_Y(y_2), P_Y(y_3), \dots, P_Y(y_m))$. This random variable satisfies the conditions $X\to \tilde{Y}\to Y$ and $H(\tilde{Y}|Y)=0$. However, $H(\tilde{Y})< H(Y)=H(W^*)$ which contradicts the minimality of $W^*$.
\end{proof}
\subsection{Connecting $D_X(Y)$ with $g_0(X;Y)$}
Considering the definition of $C_X(Y)$, one can loosely say that all the information contained in $Y$ which is correlated with $X$ is concentrated on $T^{\X}(Y)$, and therefore $D_X(Y)$ represents the amount of information contained in $Y$ and not correlated with $X$. This suggests that $D_X(Y)$ is equal to $g_0(X;Y)$. In what follows, we study two different cases where $D_X(Y)=g_0(X;Y)$.
%
First we look at the case when $C_X(Y)=I(X;Y)$. We previously showed that this happens if and only if there exists the decomposition $X=(X',V)$ and $Y=(Y', V)$
such that $Y'$ is conditionally independent of $X'$ given $V$. In this case, $D_X(Y)=H(Y|X)$. It is straightforward to show that in this case we have $g_0(X;Y)\leq H(Y'|V)$.
To see this, assume otherwise, that is, suppose that there exists a random variable, say, $Z$ such that $Z\indep X$ and also $I(Y;Z)>H(Y'|V)$. Since
$$I(Y; Z)=I(Y',V;Z)=I(V;Z)+I(Y'; Z|V),$$ the assumption $I(Y; Z)>H(Y'|V)$ implies
\begin{equation*}
    I(V;Z)>H(Y'|V,Z).
\end{equation*}
This contradicts our assumption that $Z\indep X=(X', V)$. Hence we conclude that $g_0(X;Y)\leq H(Y'|V)$.
One special case of this decomposition is the case studied by Wyner \cite{wynerCI} where $X'$, $Y'$ and $V$ are mutually independent.
Consider now the following deterministic function $f$ acting on $Y=(Y', V)$, defined by $f(y) = (y', 0)$.
Then we set $Z=f(Y)$ and hence the privacy filter is $P_{Y'|Y}$. By construction we have $Z\indep (V, X')$ and hence $Z\indep X$.
Note that $I(Y;Z)=H(Z)=H(Y')$. Since we showed above that $g_0(X;Y)\leq H(Y'|V)$ and since  in this special case
$H(Y'|V)=H(Y')$, one can conclude that $g_0(X;Y)=H(Y')$. Therefore, in this case the equality $D_X(Y)=g_0(X;Y)$ holds.    

The second setting that we examine is the binary case. From Theorem~\ref{theorem1} we know that $g_0(X;Y)=0$ for any binary correlated $X$ and $Y$.

Suppose $\X=\Y=\{0,1\}$, $Y\sim \sBer(p)$,  $P_{X|Y}(\cdot|0)=\sBer(\alpha)$ and $P_{X|Y}(\cdot|1)=\sBer(\beta)$. The condition $H(W|Y)=0$ implies that there exists a deterministic function $f:\Y\to \mathcal{W}$ with $|\mathcal{W}|\leq |\mathcal{Y}|$ such that $W=f(Y)$ and therefore, $P_{W|Y}(w|y)=1_{\{w=f(y)\}}$. The only possible cases for $P_{W|Y}$ are
  \begin{equation*}
  P_{W|Y}=   \begin{bmatrix}
0 & 1  \\
1 & 0    \end{bmatrix}
\end{equation*}
and
  \begin{equation*}
  P_{W|Y}=   \begin{bmatrix}
1 & 0  \\
0 & 1    \end{bmatrix}
\end{equation*}
where each column corresponds to a value of $Y\in \{0,1\}$. Thus  $W\sim \sBer(p)$ or $W\sim \sBer(1-p)$. In either case, $H(W)=H(Y)$. Hence, $C_X(Y)=H(Y)$, i.e., $D_X(Y)=0$ and thus $g_0(X;Y)=D_X(Y)=0$ which is what we wanted to show. Note that this argument does not depend on the cardinality of $\mathcal{X}$. In other words, it is impossible to send any information at non-zero rate with zero privacy leakage when $|\mathcal{Y}|=2$ which is a restatement of Lemma~\ref{generalizedtheorem1}.

Although the relation $D_X(Y)=g_0(X;Y)$ holds for the two cases described above, in the following example we have $g_0(X;Y)> D_X(Y)$.
\begin{example} \label{BECexample}
\emph{Consider $X$ distributed according to $\sBer(p)$ and the binary erasure channel $P_{Y|X}$ with erasure probability $\delta$. The output alphabet is therefore ternary $\{0, e, 1\}$ where $e$ denotes the erasure. Letting $Z=f(Y)$ where $f$ maps $Y=1$ and $Y=0$ to $1$ and $e$ to $0$, we conclude that $g_0(X;Y)\geq h(\delta)$. On the other hand, $H(Y|X)=h(\delta)$ which implies that $g_0(X;Y)=h(\delta)$. Furthermore, Lemma \ref{nonsingulat-Lemma} implies that $C_X(Y)=H(Y)$ and thus $D_X(Y)=0$. Therefore, although $D_X(Y)=0$, we can extract independent information of $X$ from $Y$ with positive rate.}
\end{example}  

In general, one can ask under what condition on $P_{XY}$ does the relation $D_X(Y)=g_0(X;Y)$ hold?

\section{Conclusion}
In this paper we defined a new privacy-utility tradeoff where both privacy and utility are measured in terms of mutual information. The resulting rate-privacy function characterizes the best utility when the privacy leakage is required to be less than $\eps$. For the case when $\eps=0$ (perfect privacy) we calculated the rate-privacy function for the binary case. We also introduced a new quantity which quantifies the private information contained in the observable data and examined the connection between this quantity and the rate-privacy function.

\bibliographystyle{IEEEtran}
\bibliography{bibliography}

\begin{thebibliography}{10}
\providecommand{\url}[1]{#1}
\csname url@samestyle\endcsname
\providecommand{\newblock}{\relax}
\providecommand{\bibinfo}[2]{#2}
\providecommand{\BIBentrySTDinterwordspacing}{\spaceskip=0pt\relax}
\providecommand{\BIBentryALTinterwordstretchfactor}{4}
\providecommand{\BIBentryALTinterwordspacing}{\spaceskip=\fontdimen2\font plus
\BIBentryALTinterwordstretchfactor\fontdimen3\font minus
  \fontdimen4\font\relax}
\providecommand{\BIBforeignlanguage}[2]{{%
\expandafter\ifx\csname l@#1\endcsname\relax
\typeout{** WARNING: IEEEtran.bst: No hyphenation pattern has been}%
\typeout{** loaded for the language `#1'. Using the pattern for}%
\typeout{** the default language instead.}%
\else
\language=\csname l@#1\endcsname
\fi
#2}}
\providecommand{\BIBdecl}{\relax}
\BIBdecl

\bibitem{warner}
S.~L. Warner, ``Randomized response: A survey technique for eliminating evasive
  answer bias,'' \emph{Journal of the American Statistical Association},
  vol.~60, no.~39, pp. 63--69, March 1965.

\bibitem{Dwork2006}
\BIBentryALTinterwordspacing
C.~Dwork, F.~McSherry, K.~Nissim, and A.~Smith, ``Calibrating noise to
  sensitivity in private data analysis,'' in \emph{Proceedings of the Third
  Conference on Theory of Cryptography}, ser. TCC'06.\hskip 1em plus 0.5em
  minus 0.4em\relax Berlin, Heidelberg: Springer-Verlag, 2006, pp. 265--284.
  [Online]. Available: \url{http://dx.doi.org/10.1007/11681878_14}
\BIBentrySTDinterwordspacing

\bibitem{posteriordifferential}
W.~Wang, L.~Ying, and J.~Zhang, ``On the relation between identifiability,
  differential privacy and mutual-information privacy,''
  \emph{arxiv:1402.3757}, 2014.

\bibitem{privacyaware}
J.~C. Duchi, M.~I. Jordan, and M.~J. Wainwright, ``Privacy aware learning,''
  \emph{arxiv:1210.2085}, 2013.

\bibitem{yamamotoequivocationdistortion}
H.~Yamamoto, ``A source coding problem for sources with additional outputs to
  keep secret from the receiver or wiretappers,'' \emph{IEEE Trans. Inf.
  Theory}, vol.~29, no.~6, pp. 918--923, Nov 1983.

\bibitem{funnel}
A.~Makhdoumi, S.~Salamatian, N.~Fawaz, and M.~Medard, ``From the information
  bottleneck to the privacy funnel,'' \emph{arxiv/1402.1774v4}, 2014.

\bibitem{berger}
T.~Berger and R.~Yeung, ``Multiterminal source encoding with encoder
  breakdown,'' \emph{IEEE Trans. Inf. Theory}, vol.~35, no.~2, pp. 237--244,
  Mar 1989.

\bibitem{csiszarbook}
I.~Csisz\'{a}r and J.~K\"{o}rner, \emph{Information Theory: Coding Theorems for
  Discrete Memoryless Systems}.\hskip 1em plus 0.5em minus 0.4em\relax
  Cambridge University Press, 2011.

\bibitem{Witsenhausen:valuesandbounds}
H.~S. Witsenhausen, ``Values and bounds for the common information of two
  discrete random variables,'' \emph{SIAM Journal on Applied Mathematics},
  vol.~31, no.~2, pp. 313--333, 1976.

\bibitem{wynerCI}
A.~Wyner, ``The common information of two dependent random variables,''
  \emph{IEEE Trans. Inf. Theory}, vol.~21, no.~2, pp. 163--179, Mar. 1975.

\bibitem{gacskornerCI}
\BIBentryALTinterwordspacing
P.~G\'{a}cs and J.~K\"{o}rner, ``{Common information is far less than mutual
  information},'' \emph{Probl. Inform. Control}, vol.~2, no.~2, pp. 149--162,
  1973. [Online]. Available:
  \url{http://citeseer.ifi.unizh.ch/context/562456/0}
\BIBentrySTDinterwordspacing

\bibitem{coordinationcapacity}
P.~Cuff, H.~Permuter, and T.~Cover, ``Coordination capacity,''
  \emph{Information Theory, IEEE Transactions on}, vol.~56, no.~9, pp.
  4181--4206, Sept. 2010.

\bibitem{exactCI}
G.~Kumar, C.~T. Li, and A.~{El Gamal}, ``Exact common information,''
  \emph{arxiv:1402.0062v1}, 2014.

\bibitem{newdualtoGacCI}
S.~Kamath and V.~Anantharam, ``A new dual to the {G\'{a}cs}-{K\"{o}rner} common
  information defined via the {Gray-Wyner} system,'' in \emph{Communication,
  Control, and Computing (Allerton), 2010 48th Annual Allerton Conference on},
  Sept 2010, pp. 1340--1346.

\bibitem{CIandSK}
H.~Tyagi, ``Common information and secret key capacity,'' \emph{IEEE Trans.
  Inf. Theory}, vol.~59, no.~9, pp. 5627--5640, Sept. 2013.

\bibitem{zeroerrorinformation}
S.~Wolf and J.~Wultschleger, ``Zero-error information and applications in
  cryptography,'' in \emph{Information Theory Workshop, 2004. IEEE}, Oct. 2004,
  pp. 1--6.

\end{thebibliography}

\end{document}